\theoremstyle{plain}
\newtheorem{proposition}{Proposition}[section]
\newtheorem{lemma}[proposition]{Lemma}
\theoremstyle{definition}
\theoremstyle{remark}
\newtheorem{remark}[proposition]{Remark}
\newcommand{\rhs}{r.h.s.\ }
\newcommand{\lhs}{l.h.s.\ }
\newcommand{\wrt}{w.r.t.\ }
\newcommand{\cf}{cf.\ }
\newcommand{\ud}{\mathrm{d}}
\newcommand{\C}{\mathbb{C}}
\newcommand{\E}{\mathfrak{E}}
\newcommand{\F}{\mathfrak{F}}
\newcommand{\A}{\mathfrak{A}}
\newcommand{\D}{\mathfrak{D}}
\newcommand{\skal}[2]{\langle #1 , #2 \rangle}
\newcommand{\id}{\mathrm{id}}
\newcommand{\eps}{\varepsilon}
\newcommand{\Di}{\slashed D}
\DeclareMathOperator{\WF}{WF}
\DeclareMathOperator{\tr}{tr}
\newcommand{\ret}{{\mathrm{r}}}
\newcommand{\adv}{{\mathrm{a}}}
\newcommand{\os}{{\mathrm{os}}}
\newcommand{\fR}{{\mathfrak{R}}}
\newcommand{\Bg}{{\mathfrak{Bg}}}
\newcommand{\Emb}{{\mathfrak{Emb}}}
\newcommand{\Test}{{\mathfrak{Test}}}
\newcommand{\vp}{{\varphi}}
\newcommand{\ph}{{\varphi}}
\begin{document}

\title{Locally covariant chiral fermions and anomalies}
\author{Jochen Zahn \\  \\ Fakult\"at f\"ur Physik, Universit\"at Wien, \\ Boltzmanngasse 5, 1090 Wien, Austria. \\ jochen.zahn@univie.ac.at}

\date{\today}

\maketitle

\begin{abstract}
We define chiral fermions in the presence of non-trivial gravitational and gauge background fields in the framework of locally covariant field theory. This allows to straightforwardly compute the chiral anomalies on non-compact Lorentzian space-times, without recourse to a weak field approximation.
\end{abstract}

\section{Introduction}

The framework of \emph{locally covariant field theory} \cite{BrunettiFredenhagenVerch, HollandsWaldWick} proved extremely successful in the context of quantum field theory on curved spacetimes, \cf \cite{HollandsWaldReview, BeniniDappiaggiHackReview} for recent reviews. The framework can be straightforwardly extended to encompass more general external fields, in particular gauge potentials \cite{LocCovDirac}. As one has the freedom to shift parts of the contribution due to the external field from the free to the interacting part of the Lagrangian, one may wonder whether the two possibilities lead to equivalent theories. This question was first raised for shifts of the contribution due to the spacetime metric \cite{HollandsWaldStress}, and the equivalence of the two approaches was termed \emph{perturbative agreement} (which can be seen as a stronger form of the Ward identities). In particular it was shown that, in renormalizable theories, the only obstruction for perturbative agreement is a nonvanishing divergence of the free stress-energy tensor. Analogously, perturbative agreement can be achieved for shifts of the contribution due to the gauge potential, unless the divergence of the free current does not vanish \cite{BackgroundIndependence}. A nonvanishing divergence of the stress-energy tensor or the current is usually called an \emph{anomaly}. 

The main examples of fields with anomalous stress-energy tensor or current are chiral fermions. In the present work, we show how chiral fermions fit into the framework of locally covariant field theory and compute the anomaly of the current and the stress-energy tensor. Of course, these anomalies are well-known, \cf \cite{AlvarezGaumeGinsparg85, Bertlmann} for overviews. However, from a conceptual point of view, the corresponding calculations are not completely satisfactory. To begin with, an anomaly is often defined as the non-invariance of the
effective action under gauge transformations of the external fields, or, equivalently, as the non-vanishing of the divergence of the current derived from it. However, the definition of the effective action requires the choice of a state. But for generic background fields, there is no preferred vacuum state. In particular, this raises the question whether the anomaly is independent of the state. If it is, then it should be possible to see it already at the algebraic level, i.e., without reference to a state.

A further drawback of the usual computations of the anomaly is that they either involve ill-defined loop integrals (in the perturbative approach), or are done on compact Riemannian spaces (as in Fujikawa's \cite{Fujikawa79} or the heat kernel method \cite{VassilevichHeatKernel}), in which case the relation to the physically relevant case of  non-compact Lorentzian spacetimes remains obscure. Strictly speaking, one can not even write down a Dirac Lagrangian for chiral fermions on Riemannian spaces.

In the context of locally covariant field theory, one works on the algebraic level, so the stress-energy tensor or the current are elements of the algebra of observables (in contrast to their expectation values, which are usually considered). Anomalies of these observables then arise because non-linear fields have to be defined by point-splitting \wrt a \emph{Hadamard parametrix} $H$, which has the same singularity structure as Hadamard two-point functions, but is defined in a locally covariant manner. This entails that the parametrix is a bi-solution to the equation of motion only up to smooth remainders. It is these smooth remainders that lead to non-vanishing divergences of the stress-energy tensor or the current. Hence, the computation of the anomalies is reduced to the computation of coinciding point limits of (covariant derivatives of) the smooth remainders. It turns out that these are given by coinciding point limits of (covariant derivatives of) so-called \emph{Hadamard coefficients}. These are related to the coefficients in a formal expansion of the heat kernel
(the coefficients $b_k(x,y)$ in the notation of \cite{VassilevichHeatKernel}), 
providing a bridge to the usual heat kernel methods, \cf also \cite{HackMoretti}.

The approach we consider here provides a \emph{local} perspective on anomalies, in contrast to the \emph{global}, or even \emph{topological} viewpoint that is often emphasized, in particular inspired by the relation to the index theorem. In our approach, one can understand an anomaly as an obstruction to finding a Hadamard parametrix $H$ such that $\tr [Q H] = 0$, where the square brackets denote the coinciding point limit and $Q$ is a bi-differential operator that vanishes on bi-solutions to the equation of motion. In particular, this point of view does no longer refer to any notions of quantum physics.

For the anomaly of the current, we obtain an expression that is in formal agreement with the result obtained via the heat kernel method. The purely gravitational anomaly in dimension $n=4k+2$ is usually computed either perturbatively in a weak field approximation \cite{AlvarezGaumeWitten}, or via the index theorem in dimension $4k+4$, by considering $M_{4k+4} = M_{4k+2} \times S_2$ \cite{AlvarezGaumeGinsparg85}. We show that one can compute the purely gravitational anomaly straightforwardly, without recourse to a higher dimensional index theorem or a weak field approximation, as a coinciding point limit of (derivatives of) a Hadamard coefficient. In two dimensions, our result agrees with those obtained by other methods. It remains to be shown that this is true for all dimensions. This requires a better understanding of the coinciding point limit of derivatives of Hadamard coefficients.

In the usual terminology, we compute the \emph{covariant} anomalies. In view of the well-known relation of covariant and consistent anomalies in the path integral formalism \cite{BardeenZumino}, one would expect a close relation of the anomalies discussed here and the consistent anomalies in the Batalin-Vilkovisky formalism in perturbative algebraic quantum field theory \cite{RejznerFredenhagenQuantization}. This is a topic for future work.

The article is structured as follows. In the next section, we review the aspects of locally covariant field theory that are relevant for our discussion, using the scalar field as the illustrative example. In Section~\ref{sec:ChiralFermions}, we discuss how chiral fermions fit into this framework. We mainly refer to the extensive literature on Dirac fermions and indicate the necessary modifications. In Section~\ref{sec:Parametrix}, we review the construction of the Hadamard parametrix and prove some results on the coinciding point limit. These are then used in Section~\ref{sec:Anomalies} to compute the divergence of the current and the stress-energy tensor.

\subsection{Notation and conventions}
\label{sec:Notation}

The following conventions and notations are adopted from \cite{BGP07}: The signature is $(-, +, \dots, +)$ and the d'Alembertian defined as $\Box = - \nabla^\mu \nabla_\mu$. Minus the squared geodesic distance of $x$ and $x'$ is denoted by $\Gamma(x, x')$.

More generally, in sections of $M \times M$, the first variable will be denoted by $x$ and the second by $x'$. Accordingly, primed derivatives act on the second variable. Indices on $\Gamma$ denote covariant derivatives. The coinciding point limit of a section on $M \times M$ is denoted by square brackets.

The field strength is defined in the mathematical convention, i.e., $F_{\mu \nu} = [\nabla_\mu, \nabla_\nu]$. The spinorial curvature is denoted by $\fR_{\mu \nu}$. 

The dimension of spacetime is denoted by $n$ and assumed to be even.

\section{Locally covariant field theory}
\label{sec:Scalar}

We begin by reviewing the framework of locally covariant field theory. For simplicity, we do this for the scalar field. The crucial requirement is that a theory is not defined on a particular background, but on all possible ones, in a coherent way. This allows to speak of \emph{the same} theory on different backgrounds, and in particular to investigate the influence of changes in the background on the quantum fields. To make this mathematically precise, one defines the set $\Bg$ of backgrounds as the set of $n$-dimensional, globally hyperbolic,\footnote{For details on the notions of global hyperbolicity, we refer to \cite[Section~1.3]{BGP07}. For our purposes, the crucial point is that on such spacetimes the Klein-Gordon operator has unique retarded and advanced propagators.} oriented and time-oriented manifolds.

The notion of compatibility will be formulated by reference to embeddings that preserve as much structure as possible. To be precise, one says that $\psi \in  \Emb(M; M')$ if $\psi$ is an isometric embedding $\psi: M \to M'$, which is a diffeomorphism on its range and preserves (time-) orientation and the causal structure, i.e., all causal curves in $M'$ connecting $\psi(x)$ and $\psi(y)$ lie in $\psi(M)$.\footnote{This ensures that the pull-back of a retarded propagator on $M'$ to $M$ coincides with the retarded propagator on $M$.}

A \emph{locally covariant field theory} is now an assignment $\Bg \ni M \mapsto \A(M)$, where $\A(M)$ is a unital $*$-algebra, interpreted as the algebra of observables measurable on $M$.
This assignment is required to be consistent in the sense that for each $\psi \in  \Emb(M; M')$ there is an injective $*$ homomorphism $\alpha_{\psi}: \A(M) \to \A(M')$ of the corresponding algebras, such that
\begin{align*}
 \alpha_{\id} & = \id, &
 \alpha_{\psi} \circ \alpha_{\psi'} = \alpha_{\psi \circ \psi'}.
\end{align*}
One also wants to know what is the same observable on different spacetimes, for example, one would like to have a consistent assignment of a stress-energy tensor to all backgrounds. This is the concept of a \emph{field}. Concretely, a field $\Phi$ is an assignment $\Bg \ni M \mapsto \Phi_M$, where $\Phi_M$ is a linear map
\[
 \Phi_M : \Test(M) \to \A(M),
\]
where $\Test(M)$ is a space of compactly supported smooth test tensors on $M$. For the case of the stress-energy tensor, one would choose the space of symmetric tensors of rank $2$. This assignment is required to be compatible with the embeddings in the following sense:
\begin{equation}
\label{eq:FieldCovariance}
 \alpha_\psi \Phi_M(t) = \Phi_{M'} (\psi_* t).
\end{equation}
Here $\psi_* t$ is the push-forward of the test tensor along the embedding $\psi:M \to M'$.

\begin{remark}
The requirement \eqref{eq:FieldCovariance} entails that a field is constructed out of the local geometric data: To evaluate $\Phi_M(t)$, one could also consider $\tilde M$, the causal completion of the support of $t$, with its canonical embedding $\psi: \tilde M \to M$, and define $\Phi_M(t) = \alpha_\psi \Phi_{\tilde M}(\psi^* t)$. Hence, $\Phi_M(t)$ can only depend on the geometric data on (the causal completion of) the support of $t$. By letting the support of $t$ become arbitrarily small, one sees that, heuristically, $\Phi_M(x)$ only depends on the geometric data at $x$.
\end{remark}

In order to construct $\A(M)$ for the real scalar field, one proceeds as follows \cite{HollandsWaldWick}: We consider $\F(M)$, the space of evaluation functionals $F: \E(M) \to \C$ on the configuration space $\E(M) = C^\infty(M, \C)$, of the form
\[
 F(\vp) = \sum_k \int_{M^k} f_k(x_1, \dots, x_k) \prod_i \vp(x_i) \ud_g x_i,
\]
where the $f_k$ are compactly supported symmetric distributions, fulfilling a certain condition on their wave front set.\footnote{For an introduction to the wave front set, we refer to \cite{BrouderDangHelein}. For the present purposes, it suffices to know that it is a subset of the cotangent bundle and a refinement of the singular support of a distribution.} A convenient notation for this functional is
\[
 F = \sum_k \int_{M^k} f_k(x_1, \dots, x_k) \prod_i \phi(x_i) \ud_g x_i,
\]
where $\phi(x)$ is the point-wise evaluation functional $\phi(x)(\vp) = \vp(x)$. For $\psi \in \Emb(M; M')$, one sets $(\alpha_\psi F)(\vp') = F(\psi^* \vp')$. On $\F(M)$, one defines the involution $F^*(\vp) = \overline{F(\bar \vp)}$ and a family of products $\star_\omega$,
\begin{multline*}
  (F \star_\omega G)(\ph_0) = \\
\sum_{k=0}^\infty \frac{\hbar^k}{k!} \int \tfrac{\delta^k}{\delta \ph(x_1) \dots \delta \ph(x_k)} F|_{\ph_0} \tfrac{\delta^k}{\delta \ph(y_1) \dots \delta \ph(y_k)} G|_{\ph_0} \prod_{j=1}^k \omega(x_j,y_j) \ud_{\bar g} x_j \ud_{\bar g} y_j.
\end{multline*}
Here $\omega$ are \emph{Hadamard two-point functions}, i.e., distributional bi-solutions of the Klein-Gordon operator $P = \Box + m^2$, such that
\begin{align}
\label{eq:HadamardCom}
 \omega(x,x') - \omega(x',x) & = i \Delta(x,x'), \\
\label{eq:HadamardCon}
 \overline{\omega(x,x')} & = \omega(x',x), \\
\label{eq:HadamardWF}
\WF(\omega) & \subset C_+,
\end{align}
where $\Delta = \Delta^\ret - \Delta^\adv$ is the difference of retarded and advanced propagator of $P$ and is called the \emph{causal propagator}. $C_\pm$ is a certain subset of $T^* M^2 \setminus \{ 0 \}$, with momenta contained in $\bar V_\pm \times \bar V_\mp$ ($V_\pm$ being the cone of positive/negative energy in $T^* M$).
The condition \eqref{eq:HadamardCom} ensures that one obtains the correct commutator, due to condition \eqref{eq:HadamardCon}, $\star_\omega$ is compatible with the involution, and \eqref{eq:HadamardWF} is a replacement for the spectrum condition. These requirements entail that $\omega$ is locally of Hadamard form \cite{Radzikowski}, i.e., for $n=4$,
\begin{equation}
\label{eq:HadamardForm}
 \omega(x,x') = \frac{1}{4 \pi^2} \lim_{\eps \to 0} \left( \frac{V_0(x,x')}{\Gamma_\eps(x,x')} + V (x,x') \log \frac{\Gamma_\eps(x,x')}{\Lambda^2} \right) + W(x,x').
\end{equation}
Here $\Gamma_\eps(x,x')$ denotes minus the squared geodesic distance endowed with some $i \eps$ prescription, \cf Section~\ref{sec:Parametrix}. $V_0$, $V$, and $W$ are smooth, and $V_0$ and $V$ are constructed locally and covariantly out of the geometric data along the unique geodesic connecting $x$ and $x'$, \cf Section~\ref{sec:Parametrix}. $\Lambda$ is a length scale needed to make sense of the logarithm.

The equation of motion is implemented by dividing out the ideal $\F_0(M)$ of functionals that vanish on all solutions to the Klein-Gordon operator, $\F^\os(M) = \F(M) / \F_0(M)$, i.e., by identifying two functionals if they coincide on all solutions.

The product $\star_\omega$ depends on $\omega$, but $(\F^\os(M), \star_\omega)$ and $(\F^\os(M), \star_{\omega'})$ are isomorphic \cite{HollandsWaldWick}, $\beta_{\omega, \omega'} (F \star_{\omega'} G) = \beta_{\omega, \omega'} F \star_\omega  \beta_{\omega, \omega'} G$, where
\begin{equation*}
 \beta_{\omega, \omega'} F = \sum_{k=0}^\infty \frac{\hbar^k}{k!} \int \tfrac{\delta^{2k}}{\delta \ph(x_1) \delta \ph(y_1) \dots \delta \ph(x_k) \delta \ph(y_k)} F \prod_{j=1}^k (\omega - \omega')(x_j,y_j) \ud_{ g} x_j \ud_{ g} y_j.
\end{equation*}
So, abstractly, the algebra is independent of the choice of $\omega$. We denote this abstract algebra by $\A(M)$. The representer of $F \in \A(M)$ in $(\F^\os(M), \star_\omega)$ is denoted by $F_\omega$. For $\psi \in \Emb(M; M')$, one defines $( \alpha_\psi F )_{\omega'} = \alpha_\psi F_{\psi^* \omega'}$.

It is straightforward to define fields taking values in $\F(M)$, for example $\phi^k(x) = \phi(x)^k$, which simply takes a test function as test tensor. However, the definition of non-linear fields taking values in $\A(M)$ is more involved. The point is that one has to ensure
\begin{equation}
\label{eq:FieldConsistency}
 \Phi_M(t)_\omega = \beta_{\omega, \omega'} \Phi_M(t)_{\omega'},
\end{equation}
but $\beta_{\omega, \omega'}$ acts non-trivially on non-linear functionals. As explained in \cite{HollandsWaldWick}, one can not single out a particular two-point function $\omega$, as this would spoil local covariance. However, one can take advantage of the fact that the Hadamard parametrix $H$, i.e., the first term on the \rhs of \eqref{eq:HadamardForm}, is constructed locally out of the geometric data, and it coincides with any Hadamard two-point function, up to a smooth remainder. Hence, given a local field $\Phi$ taking values in $\F(M)$, one can define the corresponding field in $\A(M)$ by
\begin{equation}
\label{eq:hatPhi}
\hat \Phi_M(t)_\omega = \beta_{\omega, H} \Phi_M(t),
\end{equation}
which ensures \eqref{eq:FieldConsistency}. It is clear that also \eqref{eq:FieldCovariance} is fulfilled. The application of this procedure to $\phi^k$ yields the Wick powers. On a Wick square, the above amounts to point-splitting \wrt the parametrix, i.e.,
\[
 \phi^2_M(x)_\omega = \lim_{x' \to x} \left( \phi_M(x)_\omega \star_\omega \phi_M(x')_\omega - H(x,x') \right).
\]

\begin{remark}
The Hadamard parametrix is only defined locally, i.e., in a neighborhood of the diagonal of $M \times M$. This, however, is sufficient, as in \eqref{eq:hatPhi}  $\beta_{\omega, H}$ acts on a local functional, so that only the behavior of $H$ at the diagonal is relevant.
\end{remark}

\begin{remark}
\label{rem:ParametrixAmbiguity}
The Hadamard parametrix is not unique, but one may add smooth functions on $M \times M$ that are locally constructed out of the geometric data along the geodesic connecting the two points. This corresponds to the ambiguities in the definition of Wick powers discussed in \cite{HollandsWaldWick}. In particular, this freedom is in general necessary to achieve a conserved stress-energy tensor, \cf \cite{HollandsWaldStress} for the scalar field and the discussion in Section~\ref{sec:Anomalies} for case of Dirac fermions.
\end{remark}

\section{Locally covariant chiral fermions}
\label{sec:ChiralFermions}

Dirac fermions have been extensively studied in the framework of locally covariant field theory, \cf \cite{VerchSpinStatistics, DHP09, SandersDirac, LocCovDirac}, and also Majorana fermions were treated \cite{BeniniDappiaggiHackReview}. Hence, for our discussion of chiral fermions, we will mostly highlight the changes that are necessary to implement chirality, and refer the reader interested in more details to the articles mentioned above. Technically, the main complication of chiral fermions is that the Dirac operator is not an endomorphism, as it maps, for example, left-handed to right-handed fermions.

In order to describe fermions charged under a gauge group $G$ on curved spacetimes in background gauge potentials, one has to include more data into the description of the background, i.e., a spin structure $SM$ over $M$ and a principal $G$ bundle $P$ over $M$, together with a connection.
Of course the embeddings now have to respect these additional structures, \cf \cite{LocCovDirac} for details. Given these structures, and a representation $\rho$ of $G$ on a complex vector space $V$, it is straightforward to construct the Dirac bundle as the associated bundle
\[
 DM = (SM + P) \times_{\sigma \times \rho} ( \C^{2^{n/2}} \otimes V ),
\]
where $\sigma$ is the spinor representation. The orientation provides us with a chirality operator
\[
 \chi = i^{1-\frac{n}{2}} \mathrm{vol} \ \cdot,
\]
where $\cdot$ stands for the Clifford multiplication. We can use it to define the projectors $\Pi_{L/R} = (\id \mp \chi)/2$ on the left/right-handed subspaces $D_{L/R} M$. We also consider the duals $D_{L/R}^* M$ of $D_{L/R}M$, and note that the Dirac conjugation maps $D_{L/R}M$ to $D_{R/L}^* M$.\footnote{In the Riemannian case, the conjugation maps $D_{L/R} M$ to $D^*_{L/R} M$, which is the origin of the problems in defining an action for chiral fermions in that case.} In particular, the bundle
\[
 D_{L/R}^\oplus M = D_{L/R} M \oplus D_{R/L}^* M
\]
is invariant under conjugation.

The smooth sections of $D_{L/R}^\circ M$, with $\circ$ either empty, $*$, or $\oplus$, will be denoted by $\E_{L/R}^\circ(M)$ and the compactly supported ones by $\D_{L/R}^\circ(M)$. The Dirac operator $\Di$ on $\E(M)$ splits as
\[
 \Di_{L/R} = \Di \circ \Pi_{L/R} : \E_{L/R}(M) \to \E_{R/L}(M).
\]
We also define
\[
 \Di_{L/R}^\oplus = \Di_{L/R} \oplus - \Di^*_{L/R} : \E_{L/R}^\oplus(M) \to \E_{R/L}^\oplus(M),
\]
where $\Di_{L/R}^* : \E_{R/L}^*(M) \to \E_{L/R}^*(M)$ is the adjoint of $\Di_{L/R}$ \wrt the pairing $\D_{R/L}^*(M) \times \D_{R/L}(M) \to \C$. We note that there is a natural pairing $\D_{R/L}^\oplus(M) \times \E_{L/R}^\oplus(M) \to \C$, defined by
\begin{equation}
\label{eq:Pairing}
 \skal{(f_R, f'_L)}{(f_L, f'_R)} = \skal{f'_L}{f_L} + \skal{f'_R}{f_R}.
\end{equation}
Hence, $\D_{R/L}^\oplus(SM,P)$ is the natural space of test tensors for linear left/right handed fields.

If we want to describe left-handed fermions, the changes \wrt the discussion of the scalar case in Section~\ref{sec:Scalar} can be summarized as follows: The local evaluation functionals $\phi(x)$ are now maps $\phi(x): \E_L^\oplus(M) \to D_L^\oplus M_x$.\footnote{The consideration of $\E_L^\oplus(M)$ instead of $\E_L(M)$ corresponds to the usual complexification, analogously to considering $C^\infty(M, \C)$ for the real scalar field. One could also say that we are considering fields and antifields simultaneously.} One also has to implement anticommutativity of these functionals, for which we refer to \cite{RejznerFermions}. A Hadamard two-point function $\omega$ is now a distributional section of $D^\oplus_L M \times D^\oplus_L M$, which is a bi-solution of the Dirac operator $\Di_L^\oplus$. Conditions \eqref{eq:HadamardCom} and \eqref{eq:HadamardCon} are replaced by
\begin{align*}
 \omega(u, v) + \omega(v, u) & = i S_L^\oplus(u, v), \\
 \overline{\omega(u, v)} & = \omega(v^*, u^*),
\end{align*}
where $u, v \in \D_R^\oplus(M)$ and $S_L^\oplus$ is the causal propagator for $\Di_L^\oplus$, \cf the next section. The definition of fields proceeds completely analogously to the scalar case (but note that the test tensors will be right handed spinors, \cf above). Of course, one has to use a parametrix $H_L^\oplus$ for $\Di_L^\oplus$ in the definition non-linear fields, the construction of which is discussed in the next section. For later convenience, we introduce the standard notations $\psi(x)$ and $\bar \psi(x)$ for the restriction of $\phi(x)$ to $\E_L(M)$ and $\E_R^*(M)$, respectively.

\section{The parametrix}
\label{sec:Parametrix}

Let us begin by recalling how to construct retarded and advanced propagators for the Dirac operator $\Di$. One considers
\[
 P = - \Di^2 = - \nabla^\mu \nabla_\mu - \tfrac{1}{4} F_{\mu \nu} [\gamma^\mu, \gamma^\nu] + \tfrac{1}{4} R,
\]
which is a normally hyperbolic operator. It has unique retarded/advanced propagators $\Delta^{\ret/\adv}$, which are, formally and on a causal domain, given by \cite{BGP07}
\begin{equation}
\label{eq:Delta_ret_adv}
 \Delta^{\ret/\adv}(x,x') \sim \sum_{j=0}^\infty V_j(x,x') R^{\ret/\adv}_{2 j + 2}(x,x').
\end{equation}
Here the $V_j$ are the Hadamard coefficients, i.e., smooth sections of $D M \times D M$, which are recursive solutions to the \emph{transport equation}
\begin{equation}
\label{eq:TransportEquation}
 \Gamma_\mu \nabla^\mu V_k - \left( \tfrac{1}{2} \Box \Gamma - n + 2k \right) V_k = 2 k P V_{k-1}, 
\end{equation}
with the initial condition $V_0(x, x) = \id_{D M_x}$. We refer to Section~\ref{sec:Notation} for the definition of $\Gamma$ and $\Box$. The $R^{\ret/\adv}_j$ are distributions on $M \times M$, the \emph{Riesz distributions}. Note that the series on the \rhs of \eqref{eq:Delta_ret_adv} does in general not converge. However, as indicated by the symbol $\sim$, the difference of $\Delta^{\ret/\adv}$ and the series truncated after $j = n-1+N$ is of regularity $C^N$ and vanishes as $\Gamma^N$ near the light cone \cite[Thm.~2.5.2]{BGP07}. Hence, for the consideration of coinciding point limits of a finite number of derivatives, the formal expression is sufficient.

Given $\Delta^{\ret/\adv}$, the retarded/advanced propagator $S^{\ret/\adv}$ for $\Di$ is defined by
\[
 S^{\ret/\adv} = - \Di \circ \Delta^{\ret/\adv} = - \Delta^{\ret/\adv} \circ \Di.
\]
That $S^{\ret/\adv}$ is a bi-solution, or equivalently, that the second equality holds, was demonstrated in \cite{DimockDirac}. Again, one defines $S = S^\ret - S^\adv$, and sets $S^\oplus = S \oplus - S^*$, with $S^*$ being the causal propagator for $\Di^*$ (which coincides with minus the adjoint of $S$).

As $P$ commutes with $\Pi_{L/R}$, so does $\Delta^{\ret/\adv}$. Hence, $S_{L/R}^{\ret/\adv} = S^{\ret/\adv} \circ \Pi_{R/L}$ interchanges the chirality and is the retarded/advanced propagator for $\Di_{L/R}$. Analogously to the above, one defines $S_{L/R}$ and $S^\oplus_{L/R}$. Due to the duality of $\D^\oplus_{R/L}(M)$ and $\E^\oplus_{L/R}(M)$, the latter can be seen as a distributional section of $D^\oplus_{L/R} M \times D^\oplus_{L/R} M$.

As for the retarded/advanced propagators, the Hadamard parametrix $H$ for $\Di$ will be defined via the Hadamard parametrix $h$ for $P$. Concretely, we have
\begin{equation}
\label{eq:hFormal}
 h^\pm \sim \frac{1}{2 \pi} \sum_{j=0}^\infty V_j T^\pm_{2j+2},
\end{equation}
where the distributions $T^\pm_j$ are defined as follows (for even $j$ and $n$):
\begin{equation}
\label{eq:T}
 T^\pm_j = \lim_{\eps \to +0}
\begin{cases}
 C'_{j,n} (-\Gamma \mp i \varepsilon \theta_0)^{\frac{j-n}{2}} & \text{ if } j < n, \\
 C_{j,n}  \Gamma^{\frac{j-n}{2}} \log (-\Gamma \mp i \varepsilon \theta_0) / \Lambda^2 & \text{ if } j \geq n,
\end{cases}
\end{equation}
where $\Lambda$ is again a length scale and
\begin{align*}
 C_{j,n} & = \frac{2^{1-j} \pi^{\frac{2-n}{2}}}{(\frac{j}{2}-1)! (\frac{j-n}{2})!}, &
 C'_{j, n} & = - \frac{2^{1-j} \pi^{\frac{2-n}{2}} (\frac{n-j}{2}-1)!}{(\frac{j}{2}-1)!}.
\end{align*}
We also used the notation $\theta_0(x,x') = t(x)-t(x')$, where $t$ is some time function. We note that $T^+_j(x,x') = T^-_j(x',x)$ and
\[
 T^+_j - T^-_j = 2 \pi i \left( R^\ret_j - R^\adv_j \right),
\]
which ensures \eqref{eq:HadamardCom}. Furthermore, the wave front sets of  the $T^+_j$ are such that \eqref{eq:HadamardWF} holds.\footnote{This holds even though \eqref{eq:Delta_ret_adv} and \eqref{eq:hFormal} are only formal expansions, \cf \cite{LocCovDirac}.}

To describe chiral fermions, we define $h_{L/R}^\pm = h^\pm \circ \Pi_{L/R}$, where $h^\pm$ is interpreted as an operator on sections of $DM$. The parametrix $H^\oplus_L$, i.e., the distributional section on $D^\oplus_L M \times D^\oplus_L M$ is then defined as
\begin{multline}
\label{eq:H_oplus}
 H^\oplus_L((f_R, f'_L), (g_R, g'_L)) = - \tfrac{1}{2} \left( h^+_R(\Di_R^* f'_L, g_R) + h^+_L(f'_L,\Di_R g_R) \right. \\
 \left. - h^-_R(\Di_R^* g'_L, f_R) - h^-_L(g'_L, \Di_R f_R) \right).
\end{multline}
Note that a distributional section on $D^\oplus_L M \times D^\oplus_L M$ can be naturally evaluated on test sections of $D^\oplus_R M \times D^\oplus_R M$, due to the canonical pairing.

We now discuss some properties of the Hadamard parametrices $h^\pm$ which will be important for the computation of the anomalies. We begin by stating the following lemma, whose proof is straightforward:
\begin{lemma}
\label{lemma}
The distributions $T^\pm_j$ defined in \eqref{eq:T} satisfy
\begin{align}
\label{eq:GammaT}
 \Gamma T^\pm_j & = \begin{cases} j (j - n + 2) T^\pm_{j + 2} & \text{if } j \neq n-2 \\ - C'_{n-2,n} & \text{if } j = n-2 \end{cases} \\
\label{eq:nablaT}
 2 j \nabla T^\pm_{j + 2} & =  \begin{cases} T^\pm_j \nabla \Gamma & \text{if } j < n \\  T^\pm_j \nabla \Gamma + 2 j C_{j+2,n} \Gamma^{\frac{j-n}{2}} \nabla \Gamma & \text{if } j \geq n \end{cases} \\
\label{eq:T0}
 T^\pm_0 & = 0.
\end{align}
For a smooth function $V$ on $M^2$, vanishing at coinciding points, we define
\[
  V \tilde T^\pm_0 = \lim_{\varepsilon \to +0} D_n V (-\Gamma \mp i \varepsilon \theta_0 + \varepsilon^2)^{-\frac{n}{2}}
\]
with
\[
 D_n =
\begin{cases}
(n-2) C'_{2,n} & \text{if } n \neq 2 \\
- 2 C_{2,2} & \text{if } n = 2.
\end{cases}
\]
Then
\begin{align}
\label{eq:nablaTildeT0}
 \nabla T^\pm_2 & = \tfrac{1}{2} \tilde T^\pm_0 \nabla \Gamma \\
\label{eq:GammaTildeT0}
 \Gamma \tilde T^\pm_0 & =
\begin{cases}
(2-n) T^\pm_2 & \text{ if $n > 2$} \\
2 C_{2,2} & \text{ if $n=2$.}
\end{cases}
\end{align}
\end{lemma}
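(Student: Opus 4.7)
The plan is to verify each identity by direct substitution into the defining expressions \eqref{eq:T}, treating the two branches $j<n$ (power-type) and $j\ge n$ (log-type) separately, and reducing the combinatorial factors to verify with the help of the recursion identities for $C_{j,n}$ and $C'_{j,n}$. The key technical point throughout is that the distributional limits $\varepsilon\to+0$ are compatible with pointwise algebraic manipulations away from the diagonal, so that the nontrivial content reduces to book-keeping of coefficients.

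For \eqref{eq:GammaT}, I would simply multiply the defining expression by $\Gamma$ and use $\Gamma=-(-\Gamma\mp i\varepsilon\theta_0)+(\mp i\varepsilon\theta_0)$, where the second summand vanishes in the $\varepsilon\to 0$ limit. For $j<n-2$ this raises the exponent by one and I am left with the combinatorial identity $-C'_{j,n}=j(j-n+2)\,C'_{j+2,n}$, which follows directly from $(\tfrac{n-j}{2}-1)!=(\tfrac{n-j}{2}-1)(\tfrac{n-j}{2}-2)!$ and $(\tfrac{j}{2})!=(\tfrac{j}{2})(\tfrac{j}{2}-1)!$. The exceptional case $j=n-2$ is exactly the place where multiplication by $\Gamma$ brings the exponent to $0$, which is why the value is simply $-C'_{n-2,n}$. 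The transition $j=n$ and $j>n$ is analogous using $C_{j,n}=j(j-n+2)\,C_{j+2,n}$.

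For \eqref{eq:nablaT}, I take $\nabla$ of \eqref{eq:T}, using $\nabla(-\Gamma\mp i\varepsilon\theta_0)=-\nabla\Gamma$ modulo terms that vanish in the $\varepsilon\to 0$ limit, and match coefficients as above. In the log branch ($j\ge n$), Leibniz produces two terms: differentiation of the $\Gamma^{(j+2-n)/2}$ factor reproduces the power-branch structure up to the same combinatorial check, while differentiation of the logarithm $\log(-\Gamma\mp i\varepsilon\theta_0)$ produces exactly the extra summand $2j\,C_{j+2,n}\Gamma^{(j-n)/2}\nabla\Gamma$. The identity \eqref{eq:T0} is immediate from the definition of $C'_{0,n}$, whose denominator contains $(\tfrac{0}{2}-1)!=\Gamma(0)$, interpreted via the Gamma function as infinite, so the coefficient vanishes. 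Identity \eqref{eq:nablaTildeT0} for $n>2$ is the special case $j+2=2$ of the power-branch computation, with the coefficient $D_n=(n-2)C'_{2,n}$ arranged precisely so that $\tfrac12\tilde T^\pm_0\nabla\Gamma=\nabla T^\pm_2$; for $n=2$ the same check reduces to differentiation of the logarithm, and the choice $D_2=-2C_{2,2}$ makes the identity hold.

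The main obstacle, and the only place where one cannot simply manipulate algebraically, is identity \eqref{eq:GammaTildeT0}, since $\tilde T^\pm_0$ involves the stronger regularization $-\Gamma\mp i\varepsilon\theta_0+\varepsilon^2$. The clean way is to write
\[
 \Gamma=-(-\Gamma\mp i\varepsilon\theta_0+\varepsilon^2)+(\varepsilon^2\mp i\varepsilon\theta_0),
\]
multiply $\tilde T^\pm_0$ by this decomposition, and observe that the first piece immediately produces $D_n(-\Gamma\mp i\varepsilon\theta_0+\varepsilon^2)^{1-n/2}$, which in the limit equals $(2-n)T^\pm_2$ for $n>2$ (by the definition of $D_n$ and $C'_{2,n}$) and the constant $2C_{2,2}$ for $n=2$. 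The second piece must be shown to vanish distributionally when tested against smooth $V$ vanishing at coinciding points; here one uses that the prefactor $(\varepsilon^2\mp i\varepsilon\theta_0)V$ provides enough extra order at the diagonal to compensate the $\varepsilon^{-n}$ growth of the regularized inverse power, so that the product tends to zero as $\varepsilon\to 0$. This is the only step where the distributional nature of the $T^\pm_j$, rather than formal pointwise manipulation, really enters.
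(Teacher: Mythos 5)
The paper deliberately omits a proof of this lemma, stating only that it is ``straightforward'', so there is nothing to compare against line by line; your direct verification is exactly the kind of argument the author is alluding to, and it is essentially correct. The coefficient recursions you isolate, $-C'_{j,n}=j(j-n+2)C'_{j+2,n}$ and $C_{j,n}=j(j-n+2)C_{j+2,n}$, are the right ones and do follow from the stated factorial identities (together with the factor $4$ from $2^{1-j}/2^{-1-j}$), and your handling of $T^\pm_0$, of the exceptional case $j=n-2$ in \eqref{eq:GammaT}, and of \eqref{eq:GammaTildeT0} via the decomposition $\Gamma=-(-\Gamma\mp i\varepsilon\theta_0+\varepsilon^2)+(\varepsilon^2\mp i\varepsilon\theta_0)$ is sound. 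Two small points deserve explicit mention. First, in \eqref{eq:nablaT} the case $j=n-2$ is a genuine crossover: $T^\pm_{j+2}=T^\pm_n$ is of logarithmic type while $T^\pm_{n-2}$ is of power type, so neither of your two recursions applies; one needs the mixed identity $C'_{n-2,n}=-2(n-2)C_{n,n}$, which does hold but should be checked separately (your \eqref{eq:nablaTildeT0} discussion only covers its $n=2$ instance). Second, discarding the terms proportional to $\varepsilon\,\theta_0$ or $\varepsilon\nabla\theta_0$ in \eqref{eq:GammaT} and \eqref{eq:nablaT} is a distributional statement of the same nature as the one you rightly single out for \eqref{eq:GammaTildeT0} --- it is not purely ``away from the diagonal'' bookkeeping --- though it is standard for boundary values of this type and does not affect the validity of the argument.
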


\begin{remark}
Similar relations hold for the Riesz distributions $R^{\ret/\adv}$. The differences are that for the Riesz distributions the contributions involving $C$ and $C'$ in \eqref{eq:GammaT}, \eqref{eq:nablaT}, and \eqref{eq:GammaTildeT0} are absent, and that, instead of \eqref{eq:T0}, one has $R^{\ret/\adv}_0 = \delta$. The latter has the consequence that $\Delta^{\ret/\adv}$ are Green's functions (instead of solutions), whereas the former lead to $h^\pm$ being a solution only up to smooth remainders. We also note that the smooth remainders in \eqref{eq:GammaT}, \eqref{eq:nablaT}, and \eqref{eq:GammaTildeT0} are absent in odd dimensions. In particular, there are then no anomalies.
\end{remark}

The following proposition was proven in \cite{MorettiStressEnergy}. For the convenience of the reader, we include a proof here, too.
\begin{proposition}
\label{prop:CP_limit}
The parametrix $h^\pm$ defined in \eqref{eq:hFormal} fulfills
\begin{align}
\label{eq:Ph}
 2 \pi [P h^\pm] & = \left( C_n + 2 n C_{n+2,n} \right) [V_{\frac{n}{2}}], \\
\label{eq:nablaPh}
 2 \pi [\nabla_\mu P h^\pm] & = \left( C_n + 2 (n+2) C_{n+2, n} \right) [\nabla_\mu V_{\frac{n}{2}}], \\
\label{eq:nabla'Ph}
 2 \pi [\nabla'_{\mu} P h^\pm] & = \left( C_n + 2n C_{n+2, n} \right) [\nabla'_{\mu} V_{\frac{n}{2}}] - 4 C_{n+2,n} [\nabla_\mu V_{\frac{n}{2}}].
\end{align}
Here
\[
 C_n = \begin{cases} C_{2,2} & \text{if } n = 2, \\
 - \frac{1}{n(n-2)} C'_{n-2,n} & \text{if } n \geq 4. \end{cases}
\]
\end{proposition}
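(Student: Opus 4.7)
The approach is to insert the formal expansion $h^\pm \sim \tfrac{1}{2\pi}\sum_{j \geq 0} V_j T^\pm_{2j+2}$ into $Ph^\pm$, apply the Leibniz rule together with the identities of the lemma, and exploit a telescoping structure generated by the transport equation~\eqref{eq:TransportEquation}. Since any truncation of the formal expansion differs from $h^\pm$ by a bi-tensor of arbitrarily high regularity vanishing to arbitrary order on the diagonal, only finitely many terms of the sum contribute to the coincidence limits or their first derivatives, so it suffices to work up to smooth remainders that vanish at $x=x'$ together with one derivative.

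The heart of the calculation is the telescoping identity
\begin{equation*}
 P(V_j T^\pm_{2j+2}) = (P V_j)\, T^\pm_{2j+2} - (P V_{j-1})\, T^\pm_{2j},
\end{equation*}
valid for $1 \leq j < n/2$. It follows from the Leibniz rule $P(V_j T) = (PV_j) T - 2 \nabla^\mu V_j \, \nabla_\mu T + V_j\, \Box T$, the expression $\nabla_\mu T^\pm_{2j+2} = \tfrac{1}{4j}T^\pm_{2j}\Gamma_\mu$ from~\eqref{eq:nablaT}, the derived formula $\Box T^\pm_{2j+2} = \tfrac{T^\pm_{2j}}{2j}\bigl(\tfrac{1}{2}\Box\Gamma - n + 2j\bigr)$ (obtained from~\eqref{eq:GammaT} together with Synge's identity $\Gamma_\mu \Gamma^\mu = -4\Gamma$), and a final rewriting via the transport equation at $k=j$. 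The case $j=0$ is handled analogously using~\eqref{eq:nablaTildeT0}, \eqref{eq:GammaTildeT0} and the $k=0$ transport equation, giving the same identity with the conventions $V_{-1} = 0$ and $T^\pm_0 = 0$. The critical case is $j = n/2$: here the constant in~\eqref{eq:GammaT} (at index $n-2$) and the $C_{n+2,n}$-contribution in~\eqref{eq:nablaT} (at index $n$) enter, modifying the identity to
\begin{equation*}
 P(V_{n/2} T^\pm_{n+2}) = (P V_{n/2})\, T^\pm_{n+2} - (P V_{n/2-1})\, T^\pm_n + V_{n/2}\, C_n - 2n\, C_{n+2,n}\, P V_{n/2-1}.
\end{equation*}
For $j > n/2$ the analogous computation produces extra terms proportional to $\Gamma^{(2j-n)/2}$ times smooth factors, which vanish on the diagonal together with one covariant derivative. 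Summing telescopically from $j=0$ up to a large $N$, $2\pi Ph^\pm$ reduces to $V_{n/2}\, C_n - 2n\, C_{n+2,n}\, PV_{n/2-1}$ modulo a remainder whose coincidence limits and first derivatives vanish.

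To finish, I evaluate the transport equation at $k = n/2$ at coincidence, using $[\Gamma_\mu] = 0$ and $[\Box\Gamma] = 2n$, to obtain $[PV_{n/2-1}] = -[V_{n/2}]$, which yields~\eqref{eq:Ph}. For~\eqref{eq:nablaPh} I differentiate the same transport equation in the first variable and use $[\nabla_\mu \Gamma_\nu] = -2 g_{\mu\nu}$ together with $[\nabla_\mu \Box\Gamma] = 0$ (easily verified in Riemann normal coordinates centred at $x'$, where $\Gamma = -\eta_{\mu\nu} x^\mu x^\nu$ exactly and $\Box\Gamma - 2n = O(x^2)$) to extract $[\nabla_\mu PV_{n/2-1}] = -\tfrac{n+2}{n}[\nabla_\mu V_{n/2}]$. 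For~\eqref{eq:nabla'Ph}, rather than invoking a second-variable transport equation, I apply Synge's rule $\nabla_\mu[F] = [\nabla_\mu F] + [\nabla'_\mu F]$ to $F = PV_{n/2-1}$ and to $F = V_{n/2}$ to read off $[\nabla'_\mu PV_{n/2-1}]$ in terms of quantities already computed; this is what produces the asymmetric $-4C_{n+2,n}[\nabla_\mu V_{n/2}]$ term. The main obstacle is the careful bookkeeping around $j = n/2$ and at the lower boundaries $j = 0$ and $j = n/2-1$ (where the branches of~\eqref{eq:GammaT} and~\eqref{eq:nablaT} meet), together with the $n = 2$ case, in which the definition of $C_n$ changes and $\tilde T^\pm_0$ replaces the vanishing $T^\pm_0$; no single step is difficult, but all of these special cases must be checked to hold uniformly.
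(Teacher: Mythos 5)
Your argument is correct and follows essentially the same route as the paper: insert the formal expansion \eqref{eq:hFormal}, use Lemma~\ref{lemma} and the transport equation \eqref{eq:TransportEquation} to cancel all singular terms (your telescoping identity is exactly the paper's statement that the quantities $U_j$ vanish), and identify the smooth remainder produced at $j=\tfrac{n}{2}$ by the constant in \eqref{eq:GammaT} and the $C_{n+2,n}$ term in \eqref{eq:nablaT}. The only (equivalent) difference is bookkeeping: the paper keeps the remainder in the explicit form $-C_{n+2,n}\left(2\nabla^\mu V_{\frac{n}{2}}\Gamma_\mu + V_{\frac{n}{2}}\nabla^\mu\Gamma_\mu\right)$ and differentiates it directly using $[\Gamma_{\mu\nu}]=-2g_{\mu\nu}$ and $[\Gamma_{\mu\nu'}]=2g_{\mu\nu}$, whereas you first rewrite it as $-2nC_{n+2,n}PV_{\frac{n}{2}-1}$ via the transport equation and then extract the coincidence limits from the differentiated transport equation and Synge's rule \eqref{eq:SyngesRule}.
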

\begin{proof}
With the help of Lemma~\ref{lemma}, we compute
\begin{align*}
 P(V_0 T^\pm_2) & = P V_0 T^\pm_2 - \left( \nabla^\mu V_0 \Gamma_\mu + \tfrac{1}{2} g^{\mu \nu} \Gamma_{\mu \nu} V_0 + n V_0 \right) \tilde T^\pm_0, \\
 P(V_1 T^\pm_4) & = P V_1 T^\pm_4 - \tfrac{1}{2} \nabla^\mu V_1 \Gamma_\mu T^\pm_2 
- \tfrac{1}{4} V_1 g^{\mu \nu} \Gamma_{\mu \nu} T^\pm_2 - \tfrac{1}{8} V_1 \Gamma_\mu \Gamma^\mu \tilde T^\pm_0 \\
  & \quad - \delta_{2,n} C_{4,n} \nabla^\mu \left( V_1 \Gamma_\mu \right), \\
 P(V_j T^\pm_{2j+2}) & = P V_j T^\pm_{2+2j} - \tfrac{1}{2j} \nabla^\mu V_j \Gamma_\mu T^\pm_{2j} \\
  & \quad - \tfrac{1}{4j} V_j \left( g^{\mu \nu} \Gamma_{\mu \nu} T^\pm_{2j} + \tfrac{1}{4(j-1)} T^\pm_{2j-2} \Gamma_\mu \Gamma^\mu \right) \\
  & \quad - \theta_{2j,n} C_{2j+2,n} \left( 2 \nabla^\mu V_j \Gamma^{j-\frac{n}{2}} \Gamma_\mu + V_j \nabla^\mu ( \Gamma^{j-\frac{n}{2}} \Gamma_\mu ) \right) .
\end{align*}
Using the identity $\Gamma_\mu \Gamma^\mu = - 4 \Gamma$ \cite[Eq.~(57)]{Poisson03}, we thus obtain
\begin{multline*}
 2 \pi P h^\pm \sim - U_0 \tilde T^\pm_0 - \sum_{j=1}^\infty \tfrac{1}{2j} U_j T^\pm_{2j} \\
  + C_n V_{\frac{n}{2}} - \sum_{j=n/2}^\infty C_{2j+2,n} \left( 2 \nabla^\mu V_j \Gamma^{j-\frac{n}{2}} \Gamma_\mu +V_j \nabla^\mu ( \Gamma^{j-\frac{n}{2}} \Gamma_\mu) \right).
\end{multline*}
with
\begin{equation*}
 U_j = \Gamma_\mu \nabla^\mu V_j + \tfrac{1}{2} V_j g^{\mu \nu} \Gamma_{\mu \nu} - (2j-n) V_j - 2 j P V_{j-1}.
\end{equation*}
Note that the $U_j$ vanish, by the transport equation \eqref{eq:TransportEquation}. Using \cite[Sect.~2.4]{Poisson03}
\begin{align*}
 [\Gamma_{\mu \nu}] & = - 2 g_{\mu \nu}, & [\Gamma_{\mu \nu'}] & = 2 g_{\mu \nu},
\end{align*}
we obtain \eqref{eq:Ph}, \eqref{eq:nablaPh}, and \eqref{eq:nabla'Ph}.
\end{proof}

A problem that one encounters when computing the divergence of currents in fermionic theories is that one not only finds expressions of the form treated in the above proposition, but also coinciding point limits of the form $[\Di {\Di'}^* h]$, where ${\Di'}^*$ is the adjoint Dirac operator acting on the second variable. This difficulty was already encountered in \cite{DHP09}, where the conformal anomaly and the divergence of the stress-energy tensor were computed for Dirac fermions in $n=4$ and a flat background connection. There, the problem was dealt with in a way that is not directly generalizable to chiral fermions, non-trivial gauge background fields, and arbitrary $n$. Our treatment below applies for any dimension and also simplifies considerably the proof of the results obtained in \cite{DHP09}.

A first thing to note is that $\Di \circ \Delta^{\ret/\adv} = \Delta^{\ret/\adv} \circ \Di$, as both sides of the equation coincide with $S^{\ret/\adv}$, which is unique. By the relation of the distributions $R^{\ret/\adv}$ and $T^{\pm}$,  we know that $\Di h^\pm$ and ${\Di'}^* h^\pm$ must coincide up to a smooth remainder (a different argument was given in \cite{DHP09}). Let us denote it by $J^\pm$, i.e.,
\[
 J^\pm = (\Di - {\Di'}^*) h^\pm.
\]
Hence, we have
\[
 [\Di {\Di'}^* h^\pm] = - [P h^\pm] - [\Di J^\pm].
\]
The first term on the \rhs is known from Prop.~\ref{prop:CP_limit}. It remains to compute the second. Thus, let us study $J^\pm$ in detail. We have
\begin{equation}
\label{eq:J}
 2 \pi J^\pm \sim Y_0 \tilde T^\pm_0 + \sum_{j=1}^\infty \tfrac{1}{2j} Y_j  T^\pm_{2j} + \sum_{j=\frac{n}{2}}^\infty C_{2j+2,n} \Gamma^{j-\frac{n}{2}} \Gamma_\mu [\gamma^\mu, V_j],
\end{equation}
where
\begin{align}
\label{eq:Y_j}
 Y_j & =  \tfrac{1}{2} \Gamma_\mu [\gamma^\mu, V_j] + 2 j (\Di - {\Di'}^*) V_{j-1}.
\end{align}
Here we used the notation
\[
 [\gamma_\mu, V_j](x,x') = \gamma_\mu V_j(x,x') - V_j(x,x') \gamma^{\mu'} g^\mu_{\mu'}(x,x'),
\]
where $g(x,x')$ denotes the parallel transport of tangent vector along the unique geodesic. Note that we used $\nabla'_{\mu} \Gamma = - g_{\mu'}^\mu \nabla_\mu \Gamma$, \cf \cite[Sect.~2.3.2]{Poisson03}.

The last term on the \rhs of \eqref{eq:J} is smooth, whereas the first two terms are a priori singular at $\Gamma = 0$. From the above argument, we know that their sum must be smooth, but this is in general no great help, due to the smooth remainder in \eqref{eq:GammaT}: In order to compute $[J^\pm]$, one would have to determine the coinciding point limit of up to $n-2j$ derivatives of $Y_j$. However, it turns out that the $Y_j$ all vanish, leaving us with only the third term on the \rhs of \eqref{eq:J}. First of all, $Y_0 = 0$, as $V_0$ is a scalar multiple of the parallel transport. Then, due to $[\nabla_\mu V_0] = 0$, we also have $[Y_1] = 0$. The statement then follows from the following:
\begin{lemma}
For $j \geq 1$, the $Y_j$ defined in \eqref{eq:Y_j} fulfill the transport equation
\begin{equation}
\label{eq:TransportEquationY}
 \Gamma_\mu \nabla^\mu Y_j - \left(\tfrac{1}{2} \Box \Gamma - n + 2(j-1) \right) Y_j - 2 j P Y_{j-1} = 0.
\end{equation}
\end{lemma}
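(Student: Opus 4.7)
My plan is to verify the identity by direct calculation: I will apply the transport operator $\mathscr{T}_k := \Gamma^\mu\nabla_\mu - (\tfrac{1}{2}\Box\Gamma - n + 2k)$ with $k = j-1$ to each half of $Y_j$ and show that the two contributions add up to $2jPY_{j-1}$. Three basic inputs will be used throughout: (i) differentiating $\Gamma_\mu \Gamma^\mu = -4\Gamma$ yields $\Gamma^\nu \Gamma_{\nu\mu} = -2\Gamma_\mu$, i.e.\ $\mathscr{T}_0\Gamma_\mu = -2\Gamma_\mu$; (ii) the parallel-transport bi-tensor obeys $\Gamma^\nu\nabla_\nu g^\mu_{\mu'} = 0$, which together with covariant constancy of the gamma matrices gives $\Gamma^\nu\nabla_\nu[\gamma^\mu, V_k] = [\gamma^\mu, \Gamma^\nu\nabla_\nu V_k]$; and (iii) $P = -\Di^2$ commutes with $\Di$, and since $\Di$ and ${\Di'}^*$ act on different variables, $[P,{\Di'}^*] = 0$ as well.

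For the first half $\tfrac{1}{2}\Gamma_\mu[\gamma^\mu, V_j]$, items (i) and (ii) combine cleanly: the term $-2\Gamma_\mu$ from $\Gamma^\nu\nabla_\nu\Gamma_\mu$ exactly shifts the scalar prefactor inside the commutator from level $j-1$ to level $j$, so that $\mathscr{T}_{j-1}\bigl(\tfrac{1}{2}\Gamma_\mu[\gamma^\mu, V_j]\bigr) = \tfrac{1}{2}\Gamma_\mu[\gamma^\mu, \mathscr{T}_j V_j]$, and the transport equation \eqref{eq:TransportEquation} converts this into $j\Gamma_\mu[\gamma^\mu, PV_{j-1}]$. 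For the second half $2j(\Di - {\Di'}^*)V_{j-1}$, I will commute $\Gamma^\mu\nabla_\mu$ past $\Di$ and ${\Di'}^*$ in turn; the commutators produce correction terms containing $\Gamma_\mu{}^{\nu}\nabla_\nu$ (from the bi-scalar Hessian) and the curvatures $F_{\mu\nu}$, $\fR_{\mu\nu}$ (from $[\nabla_\mu,\nabla_\nu]$). The dominant piece, in which $\mathscr{T}_{j-1}$ ends up acting directly on $V_{j-1}$, produces by \eqref{eq:TransportEquation} at level $j-1$ together with (iii) exactly $4j(j-1)(\Di - {\Di'}^*)PV_{j-2}$, which accounts for the second contribution to $2jPY_{j-1}$.

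The main obstacle is then to verify that the commutator corrections from the second half, together with the leftover $j\Gamma_\mu[\gamma^\mu, PV_{j-1}]$ from the first, jointly reconstruct $jP\Gamma_\mu[\gamma^\mu, V_{j-1}]$---the remaining contribution to $2jPY_{j-1}$. A direct approach is to expand $P\Gamma_\mu[\gamma^\mu, V_{j-1}]$ via $P = -\Di^2 = -\gamma^\nu\nabla_\nu\gamma^\sigma\nabla_\sigma$, apply (i), (ii), and $[\nabla_\mu,\nabla_\nu] = F_{\mu\nu} + \fR_{\mu\nu}$, and match the resulting curvature and Hessian contributions term by term. An alternative, closer in spirit to the way the transport equation \eqref{eq:TransportEquation} for $V_j$ itself emerges from the smoothness of $Ph^\pm$ in the proof of Prop.~\ref{prop:CP_limit}, is to apply $P$ to the asymptotic series \eqref{eq:J} for $J^\pm$ and equate it to the expansion of $(\Di - {\Di'}^*)Ph^\pm = PJ^\pm$ (by (iii)); matching coefficients of $T^\pm_{2j}$ and using that $J^\pm$, hence $PJ^\pm$, is smooth, so that each such coefficient must vanish, reproduces the claimed transport equation for $Y_j$.
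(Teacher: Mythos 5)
Your strategy is the same as the paper's: a direct verification using the transport equation \eqref{eq:TransportEquation}, the identities $\Gamma_\mu\Gamma^\mu=-4\Gamma$ and $g^{\lambda}_{\ \lambda';\mu}\Gamma^\mu=0$, and the commutation of $P$ with $\Di$ and ${\Di'}^*$. Your treatment of the first half of $Y_j$ is correct and cleanly organized: the $-2\Gamma_\mu$ coming from $\Gamma^\nu\Gamma_{\nu\mu}$ indeed shifts the transport operator from level $j-1$ to level $j$, so that \eqref{eq:TransportEquation} yields $j\Gamma_\mu[\gamma^\mu,PV_{j-1}]$, exactly as in the paper. (A minor slip: $\mathscr{T}_0\Gamma_\mu\neq-2\Gamma_\mu$ with your definition of $\mathscr{T}_k$; the identity you actually use and need is $\Gamma^\nu\nabla_\nu\Gamma_\mu=-2\Gamma_\mu$.)

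The genuine gap is at the decisive step, which you describe but do not carry out: showing that the commutator corrections from $\Gamma^\mu\nabla_\mu(\Di-{\Di'}^*)V_{j-1}$ combine with the leftover $j\Gamma_\mu[\gamma^\mu,PV_{j-1}]$ to reconstruct $jP\left(\Gamma_\mu[\gamma^\mu,V_{j-1}]\right)$. This is where the entire content of the lemma sits, and the toolkit you list does not suffice to close it. In the paper's computation, after the $F_{\mu\nu}$ contributions from $[P,\gamma^\lambda]=-2\gamma_\mu F^{\mu\lambda}$ cancel against those from commuting $\Di$ past $\Gamma^\mu\nabla_\mu$, and after the Hessian terms $\Gamma_{\mu\nu}\gamma^\nu\nabla^\mu V_{j-1}$ cancel, one is left with $-j\Box\Gamma_\mu\gamma^\mu V_{j-1}+j\nabla_\mu\Box\Gamma\,\gamma^\mu V_{j-1}-2j\Gamma_\mu\gamma_\nu\fR^{\nu\mu}V_{j-1}$; the first two terms produce a Ricci term by commuting derivatives on $\Gamma$, and this cancels the third only by virtue of the spin-curvature trace identity $\gamma^\mu\fR_{\mu\nu}=\tfrac{1}{2}\gamma^\mu R_{\mu\nu}$, which does not appear anywhere in your plan; you also need $\nabla'_\mu\Gamma=-g^\mu_{\mu'}\nabla_\mu\Gamma$ to handle the primed pieces. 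Your proposed alternative route (applying $P$ to the expansion \eqref{eq:J} and invoking smoothness of $PJ^\pm$) is likewise not immediate as stated: by Lemma~\ref{lemma}, acting with derivatives on the $T^\pm_j$ produces smooth remainders in addition to singular distributions --- this is precisely why $[Ph^\pm]\neq 0$ in Proposition~\ref{prop:CP_limit} even though $Ph^\pm$ is smooth --- so smoothness of $PJ^\pm$ constrains only the coefficients of the genuinely singular terms, and these must first be isolated before the claimed coefficient-matching can be justified.
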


\begin{proof}
Denote the \lhs of the equation by $E$ and compute
\begin{align*}
 E & = \tfrac{1}{2} \Gamma^\mu \Gamma_{\mu \nu} [\gamma^\nu, V_j] + \tfrac{1}{2} \Gamma_\lambda \Gamma_\mu [\gamma^\lambda, \nabla^\mu V_j] + 2 j \Gamma_\mu \nabla^\mu (\Di - {\Di'}^*) V_{j-1} \\
 & \quad - \left(\tfrac{1}{2} \Box \Gamma - n + 2(j-1) \right) Y_j - j P (\Gamma_\mu [\gamma^\mu, V_{j-1}]) \\
 & \quad - 4 j (j-1) (\Di - {\Di'}^*) P V_{j-2},
\end{align*}
where we used $g^{\lambda}_{\ \lambda' ; \mu} \Gamma^\mu = 0$, \cf \cite[Sect.~2.3.2]{Poisson03}, for the first term. Using $\Gamma^\lambda \Gamma_\lambda = - 4 \Gamma$ on the first term,
\begin{equation}
\label{eq:Comm_P_gamma}
 [P, \gamma^\lambda] = - \tfrac{1}{4} F_{\mu \nu} [[\gamma^\mu, \gamma^\nu], \gamma^\lambda] = - 2 \gamma_\mu F^{\mu \lambda},
\end{equation}
and inserting the transport equation \eqref{eq:TransportEquation} in the second and last term, we obtain
\begin{align*}
E & = - \Gamma_{\lambda} [\gamma^\lambda, V_j] + j \Gamma_\lambda [\gamma^\lambda, P V_{j-1}] + \tfrac{1}{2} \left(\tfrac{1}{2} \Box \Gamma - n + 2j \right) \Gamma_\lambda [\gamma^\lambda, V_j] \\
& \quad + 2 j \Gamma_\mu \nabla^\mu (\Di - {\Di'}^*) V_{j-1} - \left(\tfrac{1}{2} \Box \Gamma - n + 2(j-1) \right) Y_j \\
& \quad - j \Box \Gamma_\mu \gamma^\mu V_{j-1} - j \Box \Gamma_{\mu'} V_{j-1} \gamma^{\mu'} + 2 j \Gamma_{\mu \nu} \gamma^\nu \nabla^\mu V_{j-1} + 2 j \Gamma_{\mu \nu'} \nabla^\mu V_{j-1} \gamma^{\nu'} \\
& \quad - j \Gamma_\mu [\gamma^\mu, P V_{j-1}] + 2 j \Gamma_\mu \gamma_\nu F^{\nu \mu} V_{j-1} \\
& \quad - 2 j (\Di - {\Di'}^*) \left( \Gamma_\mu \nabla^\mu V_{j-1} - \left( \tfrac{1}{2} \Box \Gamma - n + 2(j-1) \right) V_{j-1} \right).
\end{align*}
Writing out $Y_j$, and commuting various differential operators, this simplifies to
\begin{align*}
E & = - 2 j \left(\tfrac{1}{2} \Box \Gamma - n + 2(j-1) \right) (\Di - {\Di'}^*) V_{j-1} \\
& \quad - 2 j \Gamma_{\mu \nu} \gamma^\nu \nabla^\mu V_{j-1} - 2 j \Gamma_{\mu \nu'}  \nabla^\mu V_{j-1} \gamma^{\nu'} - j \Box \Gamma_\mu \gamma^\mu V_{j-1} - j \Box \Gamma_{\mu'} V_{j-1} \gamma^{\mu'} \\
& \quad + 2 j \Gamma_{\mu \nu} \gamma^\nu \nabla^\mu V_{j-1} + 2 j \Gamma_{\mu \nu'} \nabla^\mu V_{j-1} \gamma^{\nu'} + 2 j \Gamma_\mu \gamma_\nu F^{\nu \mu} V_{j-1} \\
& \quad - 2 j \Gamma_\mu \gamma_\nu (F^{\nu \mu} + \fR^{\nu \mu}) V_{j-1} + 2 j (\Di - {\Di'}^*) \left( \left( \tfrac{1}{2} \Box \Gamma - n + 2(j-1) \right) V_{j-1} \right) \\
& = - j \Box \Gamma_\mu \gamma^\mu V_{j-1} + j \nabla_\mu \Box \Gamma \gamma^\mu V_{j-1} - 2 j \Gamma_\mu \gamma_\nu \fR^{\nu \mu} V_{j-1} \\
& = j R_{\mu \nu} \Gamma^\nu \gamma^\mu V_{j-1} - j \Gamma_\mu \gamma_\nu R^{\mu \nu} V_{j-1} \\
& = 0.
\end{align*}
Here we used the identity $\gamma^\mu \fR_{\mu \nu} = \frac{1}{2} \gamma^\mu R_{\mu \nu}$ for the spin curvature.
\end{proof}

We summarize our result in the following:
\begin{proposition}
\label{prop:D-D'_h}
Let $h^\pm$ be the parametrix \eqref{eq:hFormal} for $P = - \Di^2$. Then
\begin{equation}
\label{eq:D-D'_h}
 \Di h^\pm - {\Di'}^* h^\pm \sim \tfrac{1}{2 \pi}  \sum_{j=\frac{n}{2}}^\infty C_{2j+2,n} \Gamma^{j-\frac{n}{2}} \Gamma_\mu [\gamma^\mu, V_j].
\end{equation}
\end{proposition}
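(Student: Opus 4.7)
The plan is to compute $J^\pm := (\Di - {\Di'}^*) h^\pm$ from the formal expansion \eqref{eq:hFormal}. Applying $\Di$ and ${\Di'}^*$ term-by-term to $V_j T^\pm_{2j+2}$ and using the identities \eqref{eq:nablaT}--\eqref{eq:nablaTildeT0} of Lemma \ref{lemma} (together with $\nabla'_\mu\Gamma = -g^\mu_{\mu'}\nabla_\mu\Gamma$ to rewrite the primed derivatives), one collects the coefficient of each $T^\pm_{2j}$ and of $\tilde T^\pm_0$. The result has the shape \eqref{eq:J}, where the only manifestly smooth contribution is the sum $\tfrac{1}{2\pi}\sum_{j\geq n/2} C_{2j+2,n}\,\Gamma^{j-n/2}\Gamma_\mu[\gamma^\mu, V_j]$ coming from the extra logarithmic piece in \eqref{eq:nablaT}, while the potentially singular coefficients are precisely the objects $Y_j$ defined in \eqref{eq:Y_j}. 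The proposition thus reduces to showing that $Y_j \equiv 0$ for every $j \geq 0$.

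For the base cases I would argue directly. The coefficient $Y_0 = \tfrac{1}{2}\Gamma_\mu[\gamma^\mu, V_0]$ vanishes identically because $V_0$ is a scalar multiple of the spinorial parallel propagator and the Clifford generators parallel transport compatibly, so $\gamma^\mu V_0 = V_0\, g^\mu_{\mu'}\gamma^{\mu'}$. For $Y_1$ the first summand vanishes at the diagonal thanks to $[\Gamma_\mu]=0$ and the second vanishes thanks to the standard identity $[\nabla_\mu V_0]=0$, hence $[Y_1]=0$. To propagate these to all $j$, I would establish the transport equation \eqref{eq:TransportEquationY} and then induct: assuming $Y_{j-1}\equiv 0$, the equation for $Y_j$ becomes a first-order homogeneous linear ODE along each geodesic from $x'$, whose solution is determined by its value at the diagonal. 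For $j=1$ that value is $[Y_1]=0$ by the above; for $j\geq 2$ the coefficient $-(\tfrac{1}{2}\Box\Gamma - n + 2(j-1))$ at coinciding points equals the nonzero constant $-2(j-1)$, so $[Y_j]=0$ is forced algebraically by the homogeneous equation itself. Uniqueness then gives $Y_j \equiv 0$.

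The main obstacle is proving the transport equation \eqref{eq:TransportEquationY}. I would substitute the definition \eqref{eq:Y_j} into the \lhs, distribute $\Gamma_\mu\nabla^\mu$ and the prefactor $-(\tfrac{1}{2}\Box\Gamma - n + 2(j-1))$ across both summands, and manipulate using the standard toolbox of Synge/Poisson world-function identities: $\Gamma^\lambda\Gamma_\lambda = -4\Gamma$; $g^\lambda_{\ \lambda';\mu}\Gamma^\mu = 0$; the commutator $[P,\gamma^\lambda] = -2\gamma_\mu F^{\mu\lambda}$ derived from $P = -\nabla^\mu\nabla_\mu - \tfrac{1}{4}F_{\mu\nu}[\gamma^\mu,\gamma^\nu] + \tfrac{1}{4}R$; and the transport equation \eqref{eq:TransportEquation} for $V_{j-1}$, which rewrites $\Gamma_\mu\nabla^\mu V_{j-1}$ as an algebraic expression plus a contribution $2(j-1)P V_{j-2}$. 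After these substitutions the geometric terms involving $\Gamma_{\mu\nu}\gamma^\nu\nabla^\mu V_{j-1}$ and their primed analogues cancel pairwise, and the remaining curvature contributions collapse via the spin-curvature identity $\gamma^\mu\fR_{\mu\nu} = \tfrac{1}{2}\gamma^\mu R_{\mu\nu}$ to the difference $j\,(R_{\mu\nu}\Gamma^\nu\gamma^\mu - \Gamma_\mu\gamma_\nu R^{\mu\nu})V_{j-1}$, which vanishes by symmetry of the Ricci tensor. Keeping signs consistent across primed and unprimed variables while commuting differential operators past parallel-transport bitensors is the main bookkeeping challenge.
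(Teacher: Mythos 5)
Your proposal is correct and follows essentially the same route as the paper: the same decomposition of $J^\pm=(\Di-{\Di'}^*)h^\pm$ into the smooth logarithmic tail plus the $Y_j$ coefficients, the same base cases $Y_0=0$ and $[Y_1]=0$, the same transport equation \eqref{eq:TransportEquationY} proved with the same identities, and the same inductive uniqueness argument (which you in fact spell out more explicitly than the paper, including why $j=1$ needs the separate diagonal computation while $j\geq 2$ is forced algebraically). No gaps.
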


\begin{remark}
This can be straightforwardly generalized to the case of massive fermions, i.e., for $\Di = \gamma^\mu \nabla_\mu + m$. However, one should then define $P = - \Di \tilde \Di$, with $\tilde \Di = \gamma^\mu \nabla_\mu - m$. Then we still have \eqref{eq:Comm_P_gamma} and, due to $\Di \tilde \Di = \tilde \Di \Di$, also $P \Di = \Di P$ and $P \tilde \Di = \tilde \Di P$. It follows that \eqref{eq:TransportEquationY} still holds. Furthermore, in the definition of $Y_j$, one may of course replace $\Di$ and ${\Di'}^*$ by their tilded counterparts, and analogously in \eqref{eq:D-D'_h}. The result \eqref{eq:D-D'_h} then simplifies and generalizes considerably the results of \cite[Prop.~A.1]{DHP09}, as there only the case $n=4$ with flat background connection was treated, and some identities were only derived for traces.
\end{remark}

A straightforward consequence of Proposition~\ref{prop:D-D'_h} is
\[
 2 \pi [\Di (\Di - {\Di'}^*) h^\pm] = - 2 C_{2n+2,n} \gamma_\mu [\gamma^\mu, V_{\frac{n}{2}}],
\]
and similarly for supplementary derivatives. Denoting by $\tr_D$ the partial trace over the spinor indices, we thus obtain:
\begin{proposition}
Let $h^\pm$ be the parametrix \eqref{eq:hFormal} for $P = - \Di^2$. Then
\begin{align}
\label{eq:DhD}
 2 \pi \tr_D [\Di {\Di'}^* h^\pm] \chi & = \left( - C_n + 2 n C_{n+2,n} \right) \tr_D [V_{\frac{n}{2}}] \chi, \\
 2 \pi \tr_D [\nabla_\mu \Di {\Di'}^* h^\pm] \chi & = \left( - C_n + 2 n C_{n+2, n} \right) \tr_D [\nabla_\mu V_{\frac{n}{2}}] \chi, \\
 2 \pi \tr_D [\nabla'_\mu \Di {\Di'}^* h^\pm] \chi & = \left( - C_n + 2 n C_{n+2, n} \right) \tr_D [\nabla'_\mu V_{\frac{n}{2}}] \chi.
\end{align}
\end{proposition}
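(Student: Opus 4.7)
The strategy is to reduce everything to Proposition~\ref{prop:CP_limit} and the consequence of Proposition~\ref{prop:D-D'_h} already noted just above the statement. Setting $J^\pm = (\Di - {\Di'}^*)h^\pm$, the identity $\Di^2 = -P$ gives $\Di {\Di'}^* h^\pm = -Ph^\pm - \Di J^\pm$, and the same holds after applying $\nabla_\mu$ or $\nabla'_\mu$ to both sides. The coinciding point limits of the $Ph^\pm$ pieces are supplied directly by \eqref{eq:Ph}, \eqref{eq:nablaPh}, and \eqref{eq:nabla'Ph}. Hence the task is to compute $[\Di J^\pm]$, $[\nabla_\mu \Di J^\pm]$, and $[\nabla'_\mu \Di J^\pm]$ and then to take the trace against $\chi$.

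For the $\Di J^\pm$ terms I would use \eqref{eq:D-D'_h}. In each of the three coinciding point limits only the $j = n/2$ summand contributes: for $j > n/2$ the prefactor $\Gamma^{j-n/2}$ survives the application of at most two derivatives only through factors $[\Gamma] = 0$ or $[\Gamma_\nu] = 0$ at coincidence. Applying $\Di = \gamma^\lambda \nabla_\lambda$ and, as needed, one additional $\nabla_\mu$ or $\nabla'_\mu$ to the surviving term $\Gamma_\nu [\gamma^\nu, V_{n/2}]$, and invoking the standard coinciding point values $[\Gamma_\nu] = 0$, $[\Gamma_{\mu\nu}] = -2g_{\mu\nu}$, $[\Gamma_{\mu\nu'}] = 2g_{\mu\nu}$, $[\Gamma_{\mu\nu\lambda}] = [\Gamma_{\mu\nu\lambda'}] = 0$, together with $[\nabla_\mu g^\nu_{\nu'}] = [\nabla'_\mu g^\nu_{\nu'}] = 0$ (cf.\ \cite{Poisson03}), produces expressions of the form $\gamma_\nu[\gamma^\nu, X]$ with $X$ a coinciding point limit of $V_{n/2}$ or one of its derivatives, together with an extra ``cross'' contribution $\gamma^\lambda[\gamma_\mu, X_\lambda]$ in the two derivative cases.

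The final step is the $\tr_D(\cdot)\chi$ reduction. Because $\chi$ anticommutes with every $\gamma^\mu$, cyclicity of the Dirac trace yields the key identities
\[
 \tr_D\bigl(\gamma_\mu[\gamma^\mu, X]\bigr)\chi = 2n\, \tr_D(X \chi),
 \qquad
 \tr_D\bigl(\gamma^\lambda[\gamma_\mu, X_\lambda]\bigr)\chi = 2\, \tr_D(X_\mu \chi),
\]
the second using $\{\gamma^\lambda, \gamma_\mu\} = 2\delta^\lambda_\mu$. Applying these turns the $\Di J^\pm$ contributions into explicit multiples of $\tr_D[V_{n/2}]\chi$, $\tr_D[\nabla_\mu V_{n/2}]\chi$, and $\tr_D[\nabla'_\mu V_{n/2}]\chi$. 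The point I expect to be most delicate is the third equation: the cross contribution proportional to $\tr_D[\nabla_\mu V_{n/2}]\chi$ generated by $[\nabla'_\mu \Di J^\pm]$ must exactly cancel the cross term $-4C_{n+2,n}[\nabla_\mu V_{n/2}]$ coming from \eqref{eq:nabla'Ph}, so that the final coefficient collapses to $-C_n + 2n C_{n+2,n}$ uniformly in all three cases. Once this cancellation is verified, the three identities follow by straightforward bookkeeping of coefficients.
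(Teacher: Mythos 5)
Your proposal is correct and follows essentially the same route as the paper, which derives the proposition from the decomposition $\Di{\Di'}^*h^\pm = -Ph^\pm - \Di J^\pm$ together with Proposition~\ref{prop:CP_limit}, the $j=\tfrac{n}{2}$ term of Proposition~\ref{prop:D-D'_h}, and the trace identity $\tr_D(\gamma_\mu[\gamma^\mu,X])\chi = 2n\tr_D(X\chi)$. You correctly identify and verify the one nontrivial point the paper leaves implicit, namely the exact cancellation of the $[\nabla_\mu V_{n/2}]$ cross terms in the third identity, so the bookkeeping indeed closes with the uniform coefficient $-C_n+2nC_{n+2,n}$.
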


We finish this section by noting that the Hadamard coefficients fulfill $(V_k)^* = V_k^*$ \cite[Thm.~6.4.1]{Friedlander}, where $V_k^*$ is the Hadamard coefficient for $P^*$. Furthermore, $[h^\pm \circ P] = [ P^* {h^*}^\pm]^*$, where $h^*$ is the parametrix for $P^*$. From this and Prop.~\ref{prop:CP_limit}, we obtain
\begin{proposition}
\label{prop:CP_limit'}
The parametrix $h^\pm$ defined in \eqref{eq:hFormal} fulfills
\begin{align*}
 2 \pi [{P'}^* h^\pm] & = \left( C_n + 2 n C_{n+2,n} \right) [V_{\frac{n}{2}}], \\
 2 \pi [\nabla'_\mu {P'}^* h^\pm] & = \left( C_n + 2 (n+2) C_{n+2, n} \right) [\nabla'_\mu V_{\frac{n}{2}}], \\
 2 \pi [\nabla_{\mu} {P'}^* h^\pm] & = \left( C_n + 2n C_{n+2, n} \right) [\nabla_{\mu} V_{\frac{n}{2}}] - 4 C_{n+2,n} [\nabla'_\mu V_{\frac{n}{2}}].
\end{align*}
\end{proposition}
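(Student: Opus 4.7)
The plan is to derive Proposition~\ref{prop:CP_limit'} directly from Proposition~\ref{prop:CP_limit} by exploiting the two preparatory facts stated immediately before the statement: the relation $(V_k)^* = V_k^*$ between the Hadamard coefficients of $P$ and $P^*$, and the kernel identity $[h^\pm \circ P] = [P^* {h^*}^\pm]^*$. Both are already available, so the proof amounts to applying Proposition~\ref{prop:CP_limit} in a dual guise and taking an adjoint; no new geometric input is required.

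Concretely, I would first apply Proposition~\ref{prop:CP_limit} verbatim to the normally hyperbolic operator $P^*$ and its parametrix $h^{*\pm}$. Since the transport equation \eqref{eq:TransportEquation}, the structural Lemma~\ref{lemma} and the proof of Proposition~\ref{prop:CP_limit} only use that $P$ is normally hyperbolic of the appropriate form, the three resulting identities are literal copies of \eqref{eq:Ph}--\eqref{eq:nabla'Ph} with $V_{n/2}$ replaced everywhere by $V^*_{n/2}$.

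Next, I take the pointwise (spinor-endomorphism) adjoint of each of the three identities. On the left-hand side, the kernel identity $[h^\pm \circ P] = [P^* {h^*}^\pm]^*$ converts $[P^* h^{*\pm}]^*$ into $[{P'}^* h^\pm]$, with the key feature that passing through this identity exchanges the first and second arguments of the bidistribution, so derivatives originally decorated with a prime become unprimed and vice versa. On the right-hand side, $(V_k)^* = V_k^*$ and the reality of $C_n$ and $C_{n+2,n}$ turn $[V^*_{n/2}]$ and its derivatives into $[V_{n/2}]$ and its derivatives. Reading off the result produces exactly the three equations in the claim; the asymmetric placement of the extra term $-4 C_{n+2,n}$, which in Proposition~\ref{prop:CP_limit} accompanies $[\nabla'_\mu V_{n/2}]$ in the formula for $2\pi[\nabla'_\mu P h^\pm]$, now accompanies $[\nabla'_\mu V_{n/2}]$ in the formula for $2\pi[\nabla_\mu {P'}^* h^\pm]$, which is precisely the stated asymmetry.

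The only delicate point — and really the main thing to verify carefully — is this swap of primed and unprimed derivatives under the adjoint of a bidistribution, together with the conventions relating $h^{*\pm}$ to $h^\pm$ via complex/Dirac conjugation. Once that is fixed, the proof reduces to bookkeeping against Proposition~\ref{prop:CP_limit}.
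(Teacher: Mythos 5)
Your proposal is correct and follows essentially the same route as the paper, which likewise obtains Proposition~\ref{prop:CP_limit'} by combining Proposition~\ref{prop:CP_limit} (applied to $P^*$ and $h^{*\pm}$) with the two stated facts $(V_k)^* = V_k^*$ and $[h^\pm \circ P] = [P^* {h^*}^\pm]^*$, the adjoint exchanging the two arguments and hence the primed and unprimed derivatives. Your bookkeeping of the asymmetric $-4C_{n+2,n}$ term is exactly what the paper's one-line derivation leaves implicit.
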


Finally, we note Synge's rule \cite[Sect.~2.2]{Poisson03}, i.e.,
\begin{equation}
\label{eq:SyngesRule}
 \nabla_\mu [V] = [\nabla_\mu V] + [\nabla'_\mu V].
\end{equation}

\section{Anomalies}
\label{sec:Anomalies}

We now have at our disposal all the results that are needed to compute the chiral anomalies. Let us begin with the anomaly of the current. The divergence of the current is, in the notation introduced in Section~\ref{sec:ChiralFermions}, given by
\[
 \nabla_\mu j_I^\mu = \nabla_\mu \left( \bar \psi T_I \gamma^\mu \psi \right) = - (\Di^* \bar \psi) T_I \psi + \bar \psi T_I \Di \psi.  
\]
Here $I$ is a Lie algebra index and $T_I$ the corresponding generator.
It is clear that this vanishes classically. However, the corresponding quantum field, defined analogously to \eqref{eq:hatPhi}, need not vanish, as the parametrix is in general only a bi-solution modulo smooth sections. Using the form \eqref{eq:H_oplus} of the parametrix, we obtain, for left-handed fermions,
\[
 \nabla_\mu \hat{\jmath}_I^\mu = \tfrac{\hbar}{2} \tr \left( T_I \left( [\Di_R {\Di'}_L^* h^-_R] + [ {\Di'}_L^* {\Di'}_R^* h^-_L] - [\Di_L \Di_R h^-_R] - [\Di_L {\Di'}_R^* h^-_L] \right) \right).
\]
With our definition of $h^\pm_{L/R}$, we thus obtain
\[
 \nabla_\mu \hat{\jmath}_I^\mu = \tfrac{\hbar}{2} \tr \left( T_I \left( [P h^-] - [\Di {\Di'}^* h^-] \right) \chi \right),
\]
where we used that $[{P'}^* h^\pm] = [P h^\pm]$, \cf above. With \eqref{eq:Ph} and \eqref{eq:DhD}, we obtain
\[
 \nabla_\mu \hat{\jmath}_I^\mu = \tfrac{\hbar}{2 \pi} C_n \tr \left( T_I [V_{\frac{n}{2}}] \chi \right).
\]
For right-handed fermions, the sign is reversed. Noting that, up to normalization, the integral over the trace of the $[V_k]$ corresponds to the heat kernel coefficients, this is in agreement with the expression of the anomaly in the heat kernel framework, \cf \cite{VassilevichHeatKernel}. Concretely, we have\footnote{For the case of a flat gauge connection, these were computed in \cite{Christensen78}. The modifications due to a non-trivial gauge connection are straightforward.}
\begin{align*}
 [V_1] & = - \tfrac{1}{12} R + \tfrac{1}{4} F_{\mu \nu} [\gamma^\mu, \gamma^\nu], \\
 [V_2] & = \tfrac{1}{16} F_{\mu \nu} F_{\lambda \rho} [\gamma^\mu, \gamma^\nu] [\gamma^\lambda, \gamma^\rho] - \tfrac{1}{24} R F_{\mu \nu} [\gamma^\mu, \gamma^\nu] - \tfrac{1}{12} \Box F_{\mu \nu} [\gamma^\mu, \gamma^\nu] \\ 
 & + \tfrac{1}{144} R^2 + \tfrac{1}{60} \Box R - \tfrac{1}{90} R_{\mu \nu} R^{\mu \nu} + \tfrac{1}{90} R_{\mu \nu \lambda \rho} R^{\mu \nu \lambda \rho} \\ 
 &  + \tfrac{1}{6} (\fR_{\mu \nu} + F_{\mu \nu}) ( \fR^{\mu \nu} + F^{\mu \nu})
\end{align*}
so that, with $\fR_{\mu \nu} = R_{\mu \nu \lambda \rho} \gamma^\lambda \gamma^\rho$, we obtain
\[
 \nabla_\mu \hat{\jmath}^\mu_I =
\begin{cases}
\tfrac{1}{4\pi} \tfrac{1}{\sqrt{-g}} \eps^{\mu \nu} \tr_V T_I F_{\mu \nu} & \text{ for } n=2 ,\\ \tfrac{i}{32 \pi^2} \tfrac{1}{\sqrt{-g}} \eps^{\mu \nu \lambda \rho} \tr_V T_I \left( F_{\mu \nu} F_{\lambda \rho} + \tfrac{1}{24} R_{\sigma \xi \mu \nu} R^{\sigma \xi}_{\ \ \ \lambda \rho} \right) & \text{ for } n=4.
\end{cases}
\]
In these equations, the \rhs is not of the form $\nabla_\mu Q^\mu$ for some vector field $Q$ defined locally and covariantly, so no redefinition of the parametrix can eliminate these, \cf Remark~\ref{rem:ParametrixAmbiguity}. Hence, these constitute an anomaly.

Let us now compute the purely gravitational anomaly \cite{AlvarezGaumeWitten}. We restrict to a flat background gauge connection and compute the divergence of the stress-energy tensor \cite{ForgerRomer}
\begin{equation*}
 T_{\mu \nu} = \tfrac{1}{2} \left( \bar \psi \gamma_{(\mu} \nabla_{\nu)} \psi - \nabla_{(\mu} \bar \psi \gamma_{\nu)} \psi - g_{\mu \nu} \left( \bar \psi \Di \psi + \Di^* \bar \psi \psi \right) \right).
\end{equation*}
For its divergence, one obtains
\begin{align*}
 \nabla^\mu T_{\mu \nu} & = \tfrac{1}{4} \left( - \Di^* \bar \psi \nabla_\nu \psi + \nabla_\nu \Di^* \bar \psi \psi - \nabla_\nu \bar \psi \Di \psi + \bar \psi \nabla_\nu \Di \psi \right. \\ 
 & \qquad \left. - \Di^* \Di^* \bar \psi \gamma_\nu \psi + \bar \psi \gamma_\nu \Di \Di \psi \right).
\end{align*}
Here we always have at least one Dirac operator acting on a $\psi$, so that the expression vanishes classically. For its quantum counterpart, one obtains, using the same method as above and \eqref{eq:SyngesRule},
\begin{multline}
\label{eq:divT}
 \nabla^\mu \hat T_{\mu \nu} = \tfrac{1}{8 \pi} \left( ( C_n + 2 C_{n+2,n}) \tr \left( [\nabla_\nu V_{\frac{n}{2}}] \chi - [\nabla'_\nu V_{\frac{n}{2}}] \chi \right) \right. \\
 \left. + C_{n+2,n} \tr \nabla^\mu [V_\frac{n}{2}] \chi [\gamma_\nu, \gamma_\mu] + 4 C_{n+2,n} \tr \nabla_\nu [V_{\frac{n}{2}}] \right).
\end{multline}
The last term on the \rhs is of the form $\nabla^\mu Q_{\mu \nu}$, with $Q$ a covariant symmetric tensor. Such a term can be eliminated by a redefinition of the parametrix, \cf \cite{LocCovDirac}.\footnote{Note that this is not possible for scalar fields in $n=2$, \cf \cite{HollandsWaldStress}.} As the remaining terms involve the chirality $\chi$, this shows that for Dirac fermions, the parametrix may be defined such that the stress-energy tensor is conserved, in any dimension. The second term on the \rhs of \eqref{eq:divT} can also be written in the form $\nabla^\mu Q_{\mu \nu}$, but with an anti-symmetric $Q$. It can thus not be absorbed in a redefinition of the parametrix, and constitutes a contribution to the anomaly. Also the first term contributes to the anomaly. Let us check that for $n=2$, one recovers the usual chiral gravitational anomaly: Using
\begin{align*}
 [\nabla_\mu V_1] & = \tfrac{1}{2} \nabla_\mu [V_1] + \tfrac{1}{6} \nabla^\nu \left( \fR_{\mu \nu} + F_{\mu \nu} \right), \\
 [\nabla'_\mu V_1] & = \tfrac{1}{2} \nabla_\mu [V_1] - \tfrac{1}{6} \nabla^\nu \left( \fR_{\mu \nu} + F_{\mu \nu} \right),
\end{align*}
one finds
\[
 \nabla^\mu \hat T_{\mu \nu} = \tfrac{\hbar}{96 \pi} \tfrac{r}{\sqrt{-g}} \eps_{\nu \mu} \nabla^\mu R
\]
for the first two terms on the \rhs of \eqref{eq:divT}.
Here $r$ is the dimension of the representation $\rho$. Up to an imaginary factor, this coincides with the well-known result for the purely gravitational anomaly \cite[Eq.~(12.606)]{Bertlmann}.\footnote{The expression seems to deviate from the result in \cite{AlvarezGaumeWitten} by a factor $\tfrac{1}{2}$. This, however, is due to different normalization of the stress-energy tensor, as explained in \cite[Footnote~7]{AlvarezGaumeGinsparg85}.}

Of course also the conformal anomaly can be computed in the framework employed here, \cf \cite{DHP09} for the case $n=4$.

\subsection*{Acknowledgments}

I would like to thank Christoph Stephan for helpful discussions. This work was supported by the Austrian Science Fund (FWF) under the contract P24713. Also the kind hospitality of the Erwin Schr\"odinger Institute during the workshop ``Algebraic Quantum Field Theory: Its Status and Its Future'' is gratefully acknowledged.


\end{document}